\documentclass[pdflatex,sn-mathphys-ay]{sn-jnl}

\usepackage{amsmath,amssymb,amsfonts}
\usepackage{mathrsfs}
\usepackage{booktabs}
\usepackage{tabularx}
\usepackage{array}
\usepackage{algorithm}
\usepackage{algorithmicx}
\usepackage{algpseudocode}
\usepackage{listings}
\usepackage{xcolor}
\usepackage{graphicx}
\usepackage{subcaption}
\usepackage{enumitem}
\usepackage{multirow}  % Obbligatorio per \multirow
\usepackage{booktabs}  % Opzionale, per tabelle eleganti
\usepackage{enumitem}

\usepackage{amsmath}

\theoremstyle{thmstyleone}
\newtheorem{theorem}{Theorem}
\newtheorem{proposition}[theorem]{Proposition}

\newtheorem{corollary}[theorem]{Corollary}
\newtheorem{assumption}[theorem]{Assumption}

\theoremstyle{thmstyletwo}

\newtheorem{remark}{Remark}

\theoremstyle{thmstylethree}

\begin{document}

\title{Endogenous Epistemic Weighting under Heterogeneous Information}

\author*[1]{\fnm{Enrico} \sur{Manfredi}}\email{enrico.manfredi@gmail.com}
\affil*[1]{\orgname{Not Affiliated}, \orgaddress{\city{Bologna}, \country{Italy}}}

%https://orcid.org/0000-0001-6125-2816

\abstract{
Collective decision-making requires aggregating multiple noisy 
information channels about an unknown state of the world. 
Classical epistemic justifications of majority rule rely on 
homogeneity assumptions often violated when individual 
competences are heterogeneous. This paper studies endogenous 
epistemic weighting in binary collective decisions. It 
introduces the Epistemic Shared-Choice Mechanism (ESCM), a 
lightweight and auditable procedure that generates bounded, 
issue-specific voting weights from short informational 
assessments. Unlike likelihood-optimal rules, ESCM does not 
require ex ante knowledge of individual competences, but 
infers them endogenously while bounding individual influence. 
Using a central limit approximation under general regularity 
conditions, the paper establishes analytically that bounded 
competence-sensitive monotone weighting strictly increases 
the mean quality of the aggregate signal whenever competence 
is heterogeneous. Numerical comparisons under Beta-distributed 
and segmented mixture competence environments show that these 
mean gains are associated with higher signal-to-noise ratios 
and large-sample accuracy relative to unweighted majority rule.}

\keywords{Epistemic democracy; information aggregation; Condorcet Jury Theorem; endogenous weighting; heterogeneous information; central limit approximation; social choice theory}

%% If you want JEL/MSC fields rendered separately instead of inside the abstract,
%% you can uncomment these:
%% \pacs[JEL Classification]{D71, D72, D83}
%% \pacs[MSC Classification]{91B12, 91B14, 91B06, 91B44}

\maketitle

%==================================================
\section{Introduction}
\label{sec:introduction}
The aggregation of individual judgments into collective decisions is a central
problem in social choice theory and welfare economics. Since Condorcet's seminal
work \citep{condorcet1785}, a long tradition has emphasized the epistemic virtues
of majority rule, showing that collective decisions can track the true state of
the world with increasing accuracy as the electorate grows \citep{grofman1983}.
The Condorcet Jury Theorem (CJT) formalizes this insight under two key
assumptions: conditional independence of individual signals and a common
probability of correctness exceeding one half.

These assumptions are often violated in realistic decision environments.
Individual competences are heterogeneous, information is unevenly distributed,
and many participants rely on weakly informative signals
\citep{downsEconomic1957,converseNature1964,zallerNature1992}. When accuracies
vary substantially, equal weighting conflates informative and uninformative
signals, leading to suboptimal aggregation
\citep{grofman1983,bovensDemocratic2006}. From a statistical perspective, under
classical assumptions of conditionally independent binary signals and known
individual reliabilities in dichotomous choice, likelihood-based aggregation
assigns log-odds weights and maximizes the probability of a correct collective
decision among additive weighting rules
\citep{degrootReaching1974,nitzan1982,shapleyOptimizing1984}. However, these
rules presuppose knowledge of individual reliabilities, which are latent and
typically unobservable.

This raises a fundamental problem: how to approximate competence-sensitive
aggregation when informational quality must be inferred rather than directly
observed. Existing approaches either improve judgment quality prior to
aggregation through deliberation \citep{habermas1996,fishkin2018}, or assign
weights based on exogenous competence indicators \citep{estlund2008,brennan2016}.
In both cases, the aggregation rule itself does not endogenously incorporate
inferred informational reliability. Recent work on endogenous information
acquisition \citep{persicoCommittee2004,martinelliWould2006,bhattacharyaVoting2017}
studies how institutions shape incentives to acquire signals, but typically
treats epistemic quality as exogenous at the aggregation stage.

This paper studies an alternative approach: \emph{endogenous epistemic weighting}.
It introduces the Epistemic Shared-Choice Mechanism (ESCM), a class of
procedures that infer issue-specific signal reliability through short
assessments and map it into bounded voting weights. The paper does not claim to
establish general Nitzan--Paroush optimality under realistic observability
constraints. Rather, it studies a procedural attempt to approximate
competence-sensitive aggregation when reliabilities are latent and individual
influence must remain bounded.

The analysis is conducted on a stylized, reliability-based representation of
endogenous weighting, and the formal results pertain to this representation.
The peer-generated assessment mechanism is proposed as a procedural
implementation of this idea, but its full psychometric and strategic properties
remain open for future work.

The contribution proceeds in four steps. First, the paper introduces a formal
framework for endogenous bounded weighting in collective choice and relates the
log-odds version of ESCM to a likelihood-based benchmark under a consistency
assumption on the recovery of latent reliabilities. Second, it derives a
Gaussian approximation for weighted aggregation under general regularity
conditions, showing that, within this CLT framework, large-sample performance
is approximately determined by the induced signal-to-noise ratio
$\mu_T/\sigma_T$ as defined in Section~\ref{sec:clt-general}.  Third, it
illustrates numerically that, for the specific weighting rules and competence
environments studied here, this mean improvement is associated with higher
signal-to-noise ratios and large-sample accuracy. Fourth, it
establishes a general mean-improvement result under heterogeneous competence
for bounded monotone weighting under a suitable normalization.

Under heterogeneous competence distributions with non-zero variance, any bounded,
non-decreasing epistemic weighting function that is non-constant on the support
of the competence distribution and normalized so that $E_f[w(p)] = 1$ strictly
increases the mean signed aggregate signal relative to unweighted majority rule. 
The paper does not establish universal
signal-to-noise-ratio dominance for all such weighting rules. Instead, gains in
signal-to-noise ratio and collective accuracy are shown numerically for the
specific weighting specifications and competence distributions examined in
Sections~\ref{sec:unimodal} and~\ref{sec:cmm}.

The remainder of the paper is organized as follows.
Section~\ref{sec:model} introduces the formal framework and the
likelihood-based benchmark. Section~\ref{sec:escm} describes the ESCM
procedure. Section~\ref{sec:struct} discusses the structural properties of
ESCM. Section~\ref{sec:clt-general} derives the Gaussian approximation for
weighted aggregation. Section~\ref{sec:unimodal} evaluates ESCM under
Beta-distributed competence. Section~\ref{sec:cmm} examines segmented
informational structures. Section~\ref{sec:dominance} establishes the general
mean-improvement result under heterogeneous competence.
Section~\ref{sec:conclusion} concludes.

%%%%%%%%%%%%%%%%%%%%%%%%%%%%%%%%%%%%%%%%%%%%%%%%%%%%%%%%%%%%
\section{Model of the Decision}
\label{sec:model}

\subsection{Binary Epistemic Collective Choice}

We consider a binary collective decision problem with unknown true state
$x^{\ast} \in X = \{0,1\}$. The objective of the group is to select the
alternative that matches the true state.

There is a finite set of participants $N = \{1, \ldots, n\}$. Each participant
$i$ observes a private signal $\xi_i \in \{0,1\}$ about $x^{\ast}$.
Conditional on $x^{\ast}$, signals are assumed to be independent across
individuals, so that they can be interpreted as parallel information channels
in the classical epistemic sense \citep{condorcet1785,grofman1983}.

Participants cast votes $v_i \in \{0,1\}$, and we assume sincere voting, so
that
\[
v_i = \xi_i.
\]
This allows the analysis to isolate informational aggregation from strategic
behavior.

Each participant $i$ is characterized by a competence parameter
$p_i \in [0,1]$, defined as the probability that the participant's signal
matches the true state:
\[
p_i = \Pr(\xi_i = x^{\ast}).
\]
Competences may differ arbitrarily across individuals, reflecting
heterogeneous and issue-specific information. No assumption is imposed here
that all participants are better than random, although collective performance
depends on the distribution of the $\{p_i\}$.

For later use, it is convenient to introduce the correctness-based signed
signal
\[
Y_i = 2\,\mathbf{1}\{\xi_i = x^{\ast}\} - 1 \in \{-1,+1\},
\]
so that
\[
\Pr(Y_i = +1 \mid p_i) = p_i,
\qquad
\Pr(Y_i = -1 \mid p_i) = 1-p_i,
\]
and therefore
\[
\mathbb{E}[Y_i \mid p_i] = 2p_i - 1.
\]
This correctness-based signed representation differs from the vote coding
$2v_i-1 \in \{-1,+1\}$ used below to define aggregation in favor of
alternative $1$. The former is convenient for epistemic performance analysis,
while the latter is convenient for describing the decision rule.

\subsection{Aggregation Rules and the Equal-Weight Benchmark}

An aggregation rule maps the vector of individual votes
$\mathbf{v} = (v_1,\dots,v_n) \in \{0,1\}^n$ into a collective decision
$\hat{x} \in \{0,1\}$.

The main equal-weight benchmark considered in the paper is unweighted majority
rule, which assigns equal influence to all participants. In binary form, it
selects alternative $1$ whenever
\[
\sum_{i=1}^n v_i \;\ge\; \frac{n}{2},
\]
and alternative $0$ otherwise, with an arbitrary tie-breaking rule if $n$ is
even.

Equivalently, define the aggregate score
\[
T_n : \{0,1\}^n \to \mathbb{R}, \qquad
T_n(\mathbf{v}) \;=\; \sum_{i=1}^n v_i,
\]
and select alternative $1$ whenever $T_n(\mathbf{v}) \ge \frac{n}{2}$.

Under homogeneous competence and the usual Condorcet assumptions, this rule
reduces to the classical equal-weight benchmark; under heterogeneous
competence, however, it does not exploit differences in informational quality
across participants.

More generally, a weighted aggregation rule assigns a weight $w_i$ to each
participant and selects alternative $1$ whenever
\[
T_n^w(\mathbf{v}) \;=\; \sum_{i=1}^n w_i\, v_i \;\ge\; \frac{1}{2}\sum_{i=1}^n w_i,
\]
and alternative $0$ otherwise.

The central question of the paper is how such weights can be generated
endogenously, using observable assessment performance as a proxy for latent
reliability while keeping individual influence bounded.

\begin{remark}[Multi-option extension]
The analysis in this paper focuses on binary collective decisions. For
$|X|>2$, the informational environment must in general be described by a full
confusion structure
\[
p_i(x,x') = \Pr(\xi_i = x' \mid x^{\ast}=x),
\qquad x,x' \in X,
\]
rather than by a single competence parameter $p_i$. Weighted plurality remains
well defined in that setting, but the likelihood-based benchmark generalizes to
a multi-class problem. This extension is left for future work.
\end{remark}

%%%%%%%%%%%%%%%%%%%%%%%%%%%%%%%%%%%%%%%%%%%%%%%%%%%%%%%%%%%%
\section{An Epistemic Shared--Choice Mechanism for Collective Decisions}
\label{sec:escm}

Building on the binary benchmark above, we introduce a procedural architecture
for endogenous epistemic weighting. The \emph{Epistemic Shared--Choice
Mechanism} (ESCM) is not presented here as a fully validated institution.
Rather, it is a structured procedure that uses peer-generated assessment items
to construct observable proxies for issue-specific reliability, maps those
proxies into bounded aggregation coefficients, and then aggregates votes
accordingly.

Throughout this section, \emph{assessment items} are knowledge-testing items
used to evaluate participants' informational performance on the issue under
consideration. They are distinct from the \emph{alternatives}
$x \in X = \{0,1\}$, which are the objects of the collective decision at the
aggregation stage.

\paragraph{Step 1: Item authoring}
Let $N = \{1, \ldots, n\}$ be the set of participants. Each participant proposes
$l_w$ multiple-choice items on the relevant topic. Let $Q_0$ denote the initial
item pool, with
\[
|Q_0| = n\,l_w.
\]

\paragraph{Step 2: Peer review}
Each item $j \in Q_0$ is assigned to $m$ reviewers, with self-review excluded.
In a balanced design, each participant therefore reviews
\[
l_r = m\,l_w
\]
items.

For each assigned item, reviewer $i$ provides ratings in $[0,1]$ along the
following dimensions:
\begin{itemize}
  \item \textbf{relevance} $g_{ij}$: direct pertinence to the decision scope;
  \item \textbf{clarity} $c_{ij}$: clear, unambiguous, and easily understandable phrasing;
  \item \textbf{absence of bias} $b_{ij}$: neutral wording, no leading or loaded framing;
  \item \textbf{factual correctness} $f_{ij}$: factual statements are judged to be objectively true and verifiable;
  \item \textbf{scientific accuracy} $a_{ij}$: consistency with established scientific knowledge;
  \item \textbf{principle adherence} $e_{ij}$: compliance with relevant norms and standards;
  \item \textbf{difficulty level} $d_{ij}$: level of knowledge required for accurate response.
\end{itemize}

Items whose average quality evaluation falls below a prescribed threshold $\tau$ are
discarded, producing a validated pool $Q \subseteq Q_0$. The difficulty ratings
$d_{ij}$ are retained for use in the balancing step below.

This review stage is intended to screen for basic quality, perceived validity,
and difficulty; it does not by itself guarantee factual truth or full
scientific reliability.

\begin{remark}[Feasibility constraints]
The review design imposes the accounting identity $l_r = m\,l_w$ in any
balanced implementation. After filtering, Step~3 requires enough surviving
items so that each participant can be assigned $l_a$ items that they neither
authored nor reviewed. 
\end{remark}

\paragraph{Step 3: Pool construction}
Each participant $i \in N$ receives $l_a$ items from $Q$ at random, excluding
both items proposed and items previously reviewed by $i$. Let $A(i)$ denote the
set of assessment items assigned to participant $i$.

To improve comparability across participants, questionnaires are balanced as
far as feasible by estimated item difficulty. For each item $j \in Q$, let
\[
\hat d_j = \frac{1}{m}\sum_{i \in R(j)} d_{ij},
\]
where $R(j)$ denotes the set of reviewers assigned to item $j$ in Step~2. The
assignment procedure aims to keep the distribution of $\hat d_j$ over the items
in $A(i)$ approximately similar across participants.

\paragraph{Step 4: Individual assessment}
Let $Y_{ij} \in \{0,1\}$ indicate whether participant $i$ answered item
$j \in A(i)$ correctly. Each item has $q$ response options. To discourage
random guessing while preserving universal participation, ESCM uses the
truncated linear score
\[
s_i
=
\max\!\left\{
s_{\min},
\sum_{j \in A(i)}
\left(
\mathbf{1}\{Y_{ij}=1\}
-
\frac{1}{q-1}\mathbf{1}\{Y_{ij}=0\}
\right)
\right\},
\qquad s_{\min} > 0.
\]
Under random guessing, the pre-truncation score has expectation zero. The floor
$s_{\min}$ does not imply that a random responder always receives the minimal
realized score; rather, it guarantees a strictly positive lower bound on the
realized score and therefore prevents degenerate zero values in the subsequent
normalization step.

After completing the assessment, each participant submits a vote
$v_i \in \{0,1\}$ on the binary collective decision.

\paragraph{Step 5: Construction of aggregation coefficients}
Assessment scores are normalized as
\[
\bar{s}_i = \frac{s_i}{l_a}
\in
\left[\frac{s_{\min}}{l_a},\,1\right].
\]
These normalized scores are treated as observable proxies for latent
issue-specific reliability and mapped into aggregation coefficients through a
monotone function
\[
g:[0,1]\to\mathbb{R},
\qquad
w_i = g(\bar{s}_i).
\]

Three specifications are considered in this paper.

The \emph{linear} specification
\[
g(\bar{s}_i)=\bar{s}_i
\]
preserves ordinal ranking and provides a transparent benchmark.

The \emph{power} specification
\[
g_k(\bar{s}_i)=\bar{s}_i^{\,k},
\qquad k>0,
\]
allows the designer to tune epistemic selectivity: $k>1$ amplifies differences
at the top of the score distribution, while $k\in(0,1)$ compresses them.

The \emph{regularized log-odds} specification
\[
g_{\varepsilon}(\bar{s}_i)
=
\log\!\left(
\frac{\bar{s}_i+\varepsilon}{1-\bar{s}_i+\varepsilon}
\right),
\qquad \varepsilon>0,
\]
provides a bounded likelihood-oriented transformation and yields coefficients in
the interval $[\omega_{\min},\omega_{\max}]$, where
\[
\omega_{\min}
=
\log\!\left(
\frac{\tfrac{s_{\min}}{l_a}+\varepsilon}
     {1-\tfrac{s_{\min}}{l_a}+\varepsilon}
\right),
\qquad
\omega_{\max}
=
\log\!\left(\frac{1+\varepsilon}{\varepsilon}\right).
\]

\begin{remark}[Negative coefficients under regularized log-odds]
If $\bar{s}_i < 1/2$, then $g_{\varepsilon}(\bar{s}_i)$ may be negative.
Under this specification, the mechanism is therefore better interpreted as
\emph{signed evidence accumulation} rather than as non-negative weighted voting:
a participant with a sufficiently low assessment score contributes evidence
against the option they support. Designers wishing to preserve non-negative
influence for all participants should use the linear or power specifications.
\end{remark}

\paragraph{Step 6: Binary aggregation}
Since the analysis in this paper focuses on binary collective decisions, the
collective statistic induced by ESCM is
\[
T_n = \sum_{i=1}^n w_i v_i.
\]
The mechanism selects alternative $1$ whenever
\[
T_n \;\ge\; \frac{1}{2}\sum_{i=1}^n w_i,
\]
and alternative $0$ otherwise.

When all $w_i \ge 0$, this rule is weighted majority. Under the regularized
log-odds specification, however, the same formula may involve signed
coefficients and is more accurately interpreted as a binary evidence-aggregation
rule.

\begin{remark}[Strategic robustness]
The present analysis abstracts from strategic behavior at the assessment stage.
Potential manipulation of item authorship, peer-review scores, item selection,
or assessment responses is outside the scope of this paper. The formal results
below concern the aggregation properties of the induced coefficients, not the
incentive-compatibility of the full institutional procedure.
\end{remark}

%==================================
\section{Structural Properties of ESCM}
\label{sec:struct}

The following remarks and results clarify how ESCM should be interpreted
analytically. The section does not attempt to validate the full institutional
procedure in all its psychometric or strategic dimensions. Rather, it studies
stylized properties of the score-to-coefficient mapping induced by the
mechanism and shows how those properties depend on assessment length and design
parameters.

\begin{remark}[Noise as weakly discriminating information]
Low-information participants need not generate manifestly false assessment
items. They may instead generate items that are weakly discriminating: such
items may appear locally plausible, yet fail to separate more informed from
less informed respondents in a reliable way.

Within ESCM, this kind of epistemic noise is not assumed away \emph{ex ante}.
Instead, it is only partially filtered through the assessment stage: items that
induce little variation in participant performance tend to contribute less to
score dispersion and therefore less to differentiation in the induced
aggregation coefficients.
\end{remark}

\begin{remark}[Parameter flexibility]
ESCM can be adapted to different decision environments by varying the
assessment design, the review procedure, and the score-to-coefficient map. The
parameters $(q,l_w,l_r,l_a,m,s_{\min},\tau,k,\varepsilon)$ govern the balance
between assessment precision, participant burden, epistemic selectivity, and
the dispersion of influence.

Among these parameters, $l_a$ primarily controls the precision of score
estimation relative to assessment cost. The parameter $k$ controls the
selectivity of the power transformation, while $\varepsilon$ smooths the
regularized log-odds transformation near the boundaries.
\end{remark}

\begin{remark}[Assessment noise as an idealized approximation]
\label{lem:noise}
Let $C_i \mid p_i \sim \mathrm{Binomial}(l_a,p_i)$ denote an idealized
benchmark count of correct answers for participant $i$, with $p_i \in (0,1)$,
and let $\bar C_i = C_i/l_a$. This benchmark does not coincide exactly with
the normalized ESCM score $\bar s_i=s_i/l_a$ from
Section~\ref{sec:escm}. In particular, it does not model peer-generated item
selection, review filtering, or truncation at $s_{\min}$. It does, however,
capture one leading source of estimation noise in analytically transparent
form: finite assessment length.

The order-of-magnitude bounds below should be read as holding uniformly for
$p_i$ in compact subsets of $(0,1)$ and as heuristic approximations rather
than exact finite-sample inequalities.

Consider the power transformation $g_k(\bar C_i)=\bar C_i^{\,k}$ for $k>0$.
Then:
\begin{enumerate}[label=(\roman*)]
\item \textbf{Bias bound.}
A second-order Taylor expansion implies
\[
\bigl|\mathbb{E}[g_k(\bar C_i)\mid p_i]-p_i^{\,k}\bigr|
=
O\!\left(\frac{k^2}{l_a}\right),
\]
so that, for fixed $k$, the distortion induced by binomial assessment noise
vanishes as $l_a\to\infty$.

\item \textbf{Selectivity.}
The steepness ratio between a perfect score and a single mistake satisfies
\[
\frac{g_k\!\left(\dfrac{l_a-1}{l_a}\right)}{g_k(1)}
=
\left(1-\frac{1}{l_a}\right)^{k}
\approx
\exp\!\left(-\frac{k}{l_a}\right),
\]
showing that the operative design parameter is the ratio $k/l_a$: larger
values increase epistemic selectivity at the cost of greater sensitivity to
assessment noise.

\item \textbf{Sample complexity.}
For a target approximation error $\delta>0$, the preceding bound suggests the
order-of-magnitude requirement
\[
l_a \gtrsim \frac{k^2}{\delta},
\]
providing a heuristic lower bound on the number of assessment items required
for stable estimation of transformed scores.
\end{enumerate}
\end{remark}

\begin{proposition}[Asymptotic relation to reliability-based score transforms]
\label{prop:asymptotic}
Assume that $\bar s_i \xrightarrow{P} p_i$ as $l_a\to\infty$, with
$p_i\in(0,1)$ for all $i\in N$.

\begin{enumerate}[label=(\roman*)]
\item For any $k>0$,
\[
g_k(\bar s_i)=\bar s_i^{\,k}\xrightarrow{P} p_i^{\,k}.
\]

\item For any fixed $\varepsilon>0$,
\[
g_{\varepsilon}(\bar s_i)
=
\log\!\left(\frac{\bar s_i+\varepsilon}{1-\bar s_i+\varepsilon}\right)
\xrightarrow{P}
\log\!\left(\frac{p_i+\varepsilon}{1-p_i+\varepsilon}\right).
\]

\item For any $\delta\in(0,1/2)$,
\[
\sup_{p\in[\delta,\,1-\delta]}
\left|
\log\!\left(\frac{p+\varepsilon}{1-p+\varepsilon}\right)
-
\log\!\left(\frac{p}{1-p}\right)
\right|
\longrightarrow 0
\qquad\text{as }\varepsilon\downarrow 0.
\]
\end{enumerate}

Hence, under consistent score recovery, the ESCM transformations converge to
their corresponding reliability-based counterparts; moreover, on competence
intervals bounded away from $0$ and $1$, the regularized log-odds
transformation provides a bounded approximation to the unregularized
likelihood-oriented benchmark when $\varepsilon$ is small.
\end{proposition}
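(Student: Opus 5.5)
Parts (i) and (ii) follow from the continuous mapping theorem, and part (iii) is a uniform-continuity estimate on a compact interval.

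\textbf{Parts (i) and (ii).} For (i), the map $x\mapsto x^{k}$ is continuous on $[0,1]$, and hence at the limit point $p_i\in(0,1)$. Since $\bar s_i\xrightarrow{P}p_i$ by assumption, the continuous mapping theorem gives $\bar s_i^{\,k}\xrightarrow{P}p_i^{\,k}$.

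For (ii), fix $\varepsilon>0$ and consider $h_\varepsilon(x)=\log\bigl((x+\varepsilon)/(1-x+\varepsilon)\bigr)$. On $[0,1]$ both $x+\varepsilon$ and $1-x+\varepsilon$ are at least $\varepsilon>0$. So $h_\varepsilon$ is a composition of continuous functions, well defined and continuous on all of $[0,1]$. The normalized scores take values in $[s_{\min}/l_a,1]\subset[0,1]$, so $h_\varepsilon(\bar s_i)$ is always defined, and the continuous mapping theorem again yields the claimed limit.

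If one wants an elementary argument instead of the theorem, fix $\eta>0$ and pick $\rho>0$ from continuity at $p_i$. Then
\[
\Pr\bigl(|h(\bar s_i)-h(p_i)|>\eta\bigr)\le\Pr\bigl(|\bar s_i-p_i|\ge\rho\bigr)\to0 .
\]

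\textbf{Part (iii).} For $p\in[\delta,1-\delta]$, write the difference as
\[
\log\frac{p+\varepsilon}{p}-\log\frac{1-p+\varepsilon}{1-p}
=\log\!\Bigl(1+\frac{\varepsilon}{p}\Bigr)-\log\!\Bigl(1+\frac{\varepsilon}{1-p}\Bigr).
\]
Each term is nonnegative. Since $\log(1+u)\le u$ and both $p$ and $1-p$ are at least $\delta$, each term is at most $\varepsilon/\delta$. The absolute value of their difference is therefore bounded by $\varepsilon/\delta$, uniformly in $p$. Taking the supremum gives $\sup\le\varepsilon/\delta\to0$ as $\varepsilon\downarrow0$, which is an explicit rate.

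\textbf{Closing claim.} The concluding ``hence'' combines (ii) and (iii) with a triangle inequality. Up to an $O(\varepsilon/\delta)$ deterministic error, the limit in (ii) equals the true log-odds $\log(p_i/(1-p_i))$ whenever $p_i\in[\delta,1-\delta]$. Boundedness of $g_\varepsilon$ for fixed $\varepsilon$ follows because $h_\varepsilon$ is continuous on the compact set $[0,1]$; its range is the interval $[\omega_{\min},\omega_{\max}]$ noted earlier.

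The only delicate point is the order of limits. The statement takes $\varepsilon$ fixed while $l_a\to\infty$, and only then lets $\varepsilon\downarrow0$ at the level of the deterministic limits. I will keep these two steps separate and not claim joint convergence; joint convergence would require $\varepsilon$ to shrink slowly relative to the score recovery rate, which the proposition does not assert.
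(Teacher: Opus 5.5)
Your treatment of (i) and (ii) is the same as the paper's: both use the continuous mapping theorem at the limit point $p_i$, and your $\rho$--$\eta$ argument just unpacks it.

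For (iii) you take a different route. The paper defines $h_\varepsilon(p)$ and notes that $h_\varepsilon(p)\to 0$ pointwise as $\varepsilon\downarrow 0$. It then asserts that continuity in $p$ on the compact interval $[\delta,1-\delta]$ makes the convergence uniform. As written, that step needs an extra ingredient, because pointwise convergence of continuous functions on a compact set does not by itself give uniform convergence. Two ways to supply it:
\begin{itemize}
\item use joint continuity of $(\varepsilon,p)\mapsto h_\varepsilon(p)$ on the compact set $[0,1]\times[\delta,1-\delta]$;
\item use Dini's theorem, since $|h_\varepsilon(p)|$ is monotone in $\varepsilon$ for each fixed $p$.
\end{itemize}
Your decomposition $\log(1+\varepsilon/p)-\log(1+\varepsilon/(1-p))$, combined with $0\le\log(1+u)\le u$, avoids this issue. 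It gives the explicit uniform bound $\varepsilon/\delta$, so your argument is both more elementary and quantitative. The paper's version is shorter but only qualitative, and it is incomplete as stated.

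You also make explicit two points the paper leaves implicit: the closing ``hence'' follows by a triangle inequality, and $\varepsilon$ is held fixed while $l_a\to\infty$, with no joint limit claimed.

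One small imprecision: $[\omega_{\min},\omega_{\max}]$ is the range of $g_\varepsilon$ on the score domain $[s_{\min}/l_a,1]$. On all of $[0,1]$ the range is $[-\omega_{\max},\omega_{\max}]$.
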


\begin{proof}
Parts (i) and (ii) follow directly from the continuous mapping theorem, since
$x\mapsto x^k$ and
$x\mapsto \log((x+\varepsilon)/(1-x+\varepsilon))$
are continuous on $(0,1)$ for fixed $\varepsilon>0$.

For part (iii), define
\[
h_{\varepsilon}(p)
=
\log\!\left(\frac{p+\varepsilon}{1-p+\varepsilon}\right)
-
\log\!\left(\frac{p}{1-p}\right).
\]
For each fixed $p\in(0,1)$, one has $h_{\varepsilon}(p)\to 0$ as
$\varepsilon\downarrow 0$. Since $h_{\varepsilon}$ is continuous in $p$ and the
interval $[\delta,1-\delta]$ is compact and bounded away from the singular
points $0$ and $1$, the convergence is uniform on that interval.
\end{proof}

This proposition concerns the assessment-precision regime $l_a\to\infty$ for a
fixed participant-level score recovery problem. The large-population regime
$n\to\infty$ studied in the next section is analytically distinct and treats
the induced coefficient map in reduced form.

\begin{proposition}[Monotonicity, boundedness, and sign structure]
\label{prop:weights}
Fix design parameters with $0<s_{\min}\le l_a$, and let
\[
\bar s_i \in \left[\frac{s_{\min}}{l_a},\,1\right].
\]
Then the coefficient maps used in ESCM satisfy the following properties:

\begin{enumerate}[label=(\roman*)]
\item The linear map $g(\bar s)=\bar s$, the power map
$g_k(\bar s)=\bar s^{\,k}$ for $k>0$, and the regularized log-odds map
\[
g_{\varepsilon}(\bar s)
=
\log\!\left(\frac{\bar s+\varepsilon}{1-\bar s+\varepsilon}\right)
\]
are measurable and non-decreasing on
$\left[\frac{s_{\min}}{l_a},1\right]$.

\item The linear and power maps are non-negative and bounded on this interval.

\item The regularized log-odds map is bounded on this interval and satisfies
\[
g_{\varepsilon}(\bar s) < 0 \quad \text{if } \bar s < \frac12,
\qquad
g_{\varepsilon}(\bar s) = 0 \quad \text{if } \bar s = \frac12,
\qquad
g_{\varepsilon}(\bar s) > 0 \quad \text{if } \bar s > \frac12.
\]
\end{enumerate}

Consequently, the linear and power specifications induce non-negative
weighted-majority rules, whereas the regularized log-odds specification
induces a signed evidence-aggregation rule.
\end{proposition}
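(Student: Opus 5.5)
The plan is to verify each claim by elementary calculus on the compact interval $I=[s_{\min}/l_a,1]$, which is contained in $(0,1]$ because $0<s_{\min}\le l_a$. Measurability will follow from continuity in every case: continuous functions on an interval are Borel measurable. So the substance lies in monotonicity, boundedness, and sign.

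For the linear and power maps, the linear map is the identity, and $\bar s^{\,k}$ has derivative $k\bar s^{\,k-1}>0$ on $(0,1]$. Hence both are non-decreasing, and in fact strictly increasing. Non-negativity is immediate because $\bar s>0$. Boundedness follows from monotonicity: both maps take values in $[(s_{\min}/l_a)^k,1]\subseteq(0,1]$, with $k=1$ for the linear case. This settles part (ii) and the corresponding half of part (i).

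For the regularized log-odds map, I will write $g_\varepsilon(\bar s)=\log(\bar s+\varepsilon)-\log(1-\bar s+\varepsilon)$. Both arguments are strictly positive on $I$, since $\bar s+\varepsilon\ge\varepsilon>0$ and $1-\bar s+\varepsilon\ge\varepsilon>0$. So $g_\varepsilon$ is continuous and differentiable, with
\[
g_\varepsilon'(\bar s)=\frac{1}{\bar s+\varepsilon}+\frac{1}{1-\bar s+\varepsilon}>0 .
\]
This gives monotonicity. Boundedness then follows, with $g_\varepsilon(I)=[\omega_{\min},\omega_{\max}]$ exactly as in Section~\ref{sec:escm}.

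For the sign structure, $g_\varepsilon(\bar s)$ has the sign of $(\bar s+\varepsilon)-(1-\bar s+\varepsilon)=2\bar s-1$, because $\log$ is strictly increasing. This yields the three cases $\bar s<\tfrac12$, $\bar s=\tfrac12$, and $\bar s>\tfrac12$. If $s_{\min}/l_a>\tfrac12$, some cases are vacuous, and I will note that this is harmless.

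The concluding sentence follows directly from the parts above. With non-negative weights, Step~6 of Section~\ref{sec:escm} is weighted majority. With possibly negative coefficients, which occur exactly when $\bar s_i<\tfrac12$ (a case that genuinely arises whenever $s_{\min}/l_a<\tfrac12$), the rule is signed evidence aggregation.

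There is no real obstacle here. The only point needing care is keeping the denominators bounded away from zero, which is where $\varepsilon>0$ and $s_{\min}>0$ are used.
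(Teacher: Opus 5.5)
Your proposal is correct and takes the same approach as the paper: its proof simply appeals to the elementary properties of the three functions on $[s_{\min}/l_a,1]$, and your derivative and sign computations spell out exactly that. One minor correction: the positivity of both log-odds arguments follows from $\varepsilon>0$ alone, whereas $s_{\min}>0$ is what makes the linear and power weights strictly positive.
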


\begin{proof}
All claims follow directly from the elementary properties of the three
functions on the interval
$\left[\frac{s_{\min}}{l_a},1\right]$.
\end{proof}

Propositions~\ref{prop:asymptotic} and~\ref{prop:weights} play different roles.
Proposition~\ref{prop:asymptotic} relates ESCM score transformations to
stylized reliability-based benchmarks under consistent score recovery.
Proposition~\ref{prop:weights} establishes the boundedness and monotonicity
properties needed for the large-sample analysis developed in the next section.

\begin{remark}[Epistemic accuracy and inequality of influence]
\label{rmk:accuracy-inequality}
By construction, ESCM converts informational heterogeneity into heterogeneity
of aggregation coefficients. This does not by itself normatively justify
unequal influence. What the mechanism does provide is a way to make that
trade-off explicit and measurable.

For non-negative specifications such as the linear and power maps, standard
concentration measures such as the Herfindahl index
\[
H=\sum_i w_i^2
\]
and the Gini coefficient applied to $\{w_i\}$ can be used to summarize the
dispersion of influence. Under signed specifications such as regularized
log-odds, analogous summaries are better interpreted as measures of coefficient
dispersion rather than influence concentration, or else applied to suitably
normalized absolute coefficients.
\end{remark}

\begin{remark}[Computational tractability]
For fixed design parameters, the main stages of ESCM are computationally
tractable in population size under standard assignment implementations. Item
review and assessment scale linearly in the number of assigned reviews and
responses, while binary aggregation is linear in $n$.

The main practical trade-off is therefore not computational feasibility but the
administrative cost of increasing assessment precision through larger values of
$l_a$, $l_r$, and $m$. More exact balancing requirements in questionnaire
construction may require heuristic or approximate assignment procedures.
\end{remark}

Together, these properties show that ESCM is best interpreted not as a single
fixed voting rule, but as a family of procedurally generated aggregation
schemes with bounded and monotone score-to-coefficient maps whose epistemic
behavior depends on assessment precision and on the chosen score
transformation.

%=============================
\section{Central Limit Approximation under General Information Distributions}
\label{sec:clt-general}

To evaluate the large-sample epistemic performance of ESCM relative to the
equal-weight benchmark, we use a Gaussian approximation for the aggregate
signal based on the Lindeberg--Feller central limit theorem
\citep{lindebergNeue1922,fellerKolmogorovSmirnov1948}. The approximation is
formulated in reduced form: rather than modeling the full assessment procedure,
it studies the aggregate statistic induced by a bounded competence-dependent
coefficient map.

\subsection{Signal Model and Regularity Conditions}

Consider the binary epistemic setting introduced in
Section~\ref{sec:model}. Let $Y_i \in \{-1,+1\}$ denote the correctness-based
signed signal, so that
\[
\Pr(Y_i = +1 \mid p_i) = p_i,
\qquad
\Pr(Y_i = -1 \mid p_i) = 1 - p_i,
\]
where $p_i \in [0,1]$ denotes individual competence. Conditional on $p_i$,
signals are independent across individuals and satisfy
\[
\mathbb{E}[Y_i \mid p_i] = 2p_i - 1.
\]

In the present section, ESCM is represented in reduced form through a bounded
coefficient map $w:[0,1]\to\mathbb{R}$, where $w(p_i)$ denotes the aggregation
coefficient associated with competence level $p_i$. The resulting aggregate
signal is
\[
T_n = \sum_{i=1}^n w(p_i)\,Y_i.
\]

\begin{assumption}[Regularity conditions]
\label{ass:regularity}
The competences $p_i$ are independently drawn from a distribution $f$ on
$[0,1]$, and the coefficient map $w:[0,1]\to\mathbb{R}$ is measurable and
bounded. In addition, the induced variance
\[
\sigma_T^2
=
\mathbb{E}_f\!\left[w(p)^2\bigl(1-(2p-1)^2\bigr)\right]
+
\mathrm{Var}_f\!\left(w(p)(2p-1)\right)
\]
is strictly positive.
\end{assumption}

Boundedness of $w$ is satisfied by construction under ESCM for the score
transformations considered in Section~\ref{sec:struct}. In the specifications
studied below, $w$ is also taken to be non-decreasing in competence, reflecting
the epistemic requirement that more competent participants receive weakly
larger aggregation coefficients, although monotonicity is not needed for the
CLT itself.

\subsection{CLT Approximation for Weighted Aggregation}

\begin{proposition}[Gaussian approximation of the collective signal]
\label{prop:clt}
Under Assumption~\ref{ass:regularity}, define
\[
\mu_T = \mathbb{E}_f\!\left[w(p)(2p-1)\right],
\]
and
\[
\sigma_T^2
=
\mathbb{E}_f\!\left[w(p)^2\bigl(1-(2p-1)^2\bigr)\right]
+
\mathrm{Var}_f\!\left(w(p)(2p-1)\right).
\]
Then:
\begin{enumerate}[label=(\roman*)]
\item \textbf{Moment scaling.}
The mean and variance of $T_n$ satisfy
\[
\mathbb{E}[T_n] = n\,\mu_T,
\qquad
\mathrm{Var}(T_n) = n\,\sigma_T^2.
\]

\item \textbf{Asymptotic normality.}
As $n \to \infty$,
\[
\frac{T_n - n\mu_T}{\sqrt{n\,\sigma_T^2}}
\xrightarrow{d}
\mathcal{N}(0,1).
\]
\end{enumerate}
\end{proposition}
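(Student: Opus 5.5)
The plan is to treat the summands $Z_i = w(p_i)Y_i$ as i.i.d.\ random variables under the joint law in which $p_i \sim f$ independently and $Y_i \mid p_i$ is the signed Bernoulli signal. Conditional independence of the signals, together with independence of the draws $p_i$, makes the pairs $(p_i,Y_i)$ i.i.d. Measurability of $w$ makes each $Z_i$ a genuine random variable. Once this structure is in place, both parts reduce to computing the first two moments of a single $Z_i$ and then invoking a standard CLT.

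For part (i), I will use the tower property. For the mean, $\mathbb{E}[Z_i] = \mathbb{E}_f\!\left[w(p)\,\mathbb{E}[Y_i\mid p_i]\right] = \mathbb{E}_f[w(p)(2p-1)] = \mu_T$. For the variance, I will apply the law of total variance: $\mathrm{Var}(Z_i) = \mathbb{E}_f[\mathrm{Var}(Z_i\mid p_i)] + \mathrm{Var}_f(\mathbb{E}[Z_i\mid p_i])$. Since $Y_i^2 = 1$, we have $\mathrm{Var}(Y_i\mid p_i) = 1-(2p_i-1)^2$. Hence the first term equals $\mathbb{E}_f[w(p)^2(1-(2p-1)^2)]$ and the second equals $\mathrm{Var}_f(w(p)(2p-1))$, which together give exactly $\sigma_T^2$. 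Linearity of expectation and independence across $i$ then yield $\mathbb{E}[T_n] = n\mu_T$ and $\mathrm{Var}(T_n) = n\sigma_T^2$.

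For part (ii), the $Z_i$ are i.i.d.\ with $|Z_i| \le \sup|w| < \infty$, so all moments are finite. Assumption~\ref{ass:regularity} gives $\sigma_T^2 > 0$. The classical Lindeberg--L\'evy CLT therefore applies directly. To match the paper's citation, I can also verify the Lindeberg condition explicitly. With $s_n^2 = n\sigma_T^2$, the event $|Z_i - \mu_T| > \eta s_n$ is empty once $\eta\sqrt{n}\,\sigma_T > 2\sup|w|$. The Lindeberg sum is therefore identically zero for large $n$, and the normalized statistic converges to $\mathcal{N}(0,1)$.

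I expect no step to be a real obstacle. The only delicate point is interpretive: the moments in (i) are unconditional, integrating over the draw of competences. If one instead conditioned on a realized vector $(p_i)$, the variance would be $\sum_i w(p_i)^2(1-(2p_i-1)^2)$ and the $\mathrm{Var}_f$ term would disappear. I will state clearly that the i.i.d.\ mixture interpretation is the one used. I will also note that the conditional version follows from Lindeberg--Feller too, under the same boundedness, provided the empirical conditional variance grows linearly, which holds almost surely by the strong law of large numbers.
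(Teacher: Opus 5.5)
Your proposal is correct and matches the paper's proof essentially step for step: the paper likewise sets $X_i = w(p_i)Y_i$ and treats these as i.i.d. It gets $\mu_T$ from the tower property and $\sigma_T^2$ from the law of total variance, then concludes via the Lindeberg--Feller CLT using uniform boundedness of the summands. Your explicit check that the Lindeberg sum vanishes for large $n$ is more careful than the paper's one-line appeal, but your closing aside on the conditional version is not needed, and it would require centering at $\sum_i w(p_i)(2p_i-1)$ and $\mathbb{E}_f[w(p)^2(1-(2p-1)^2)]>0$, which $\sigma_T^2>0$ alone does not ensure.
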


\begin{proof}
Let
\[
X_i := w(p_i)Y_i.
\]
Under Assumption~\ref{ass:regularity}, the variables $\{X_i\}_{i=1}^n$ are
independent and identically distributed. Moreover,
\[
\mathbb{E}[X_i]
=
\mathbb{E}_f\!\left[\mathbb{E}[w(p_i)Y_i \mid p_i]\right]
=
\mathbb{E}_f\!\left[w(p_i)(2p_i-1)\right]
=
\mu_T.
\]
Similarly, by the law of total variance,
\[
\mathrm{Var}(X_i)
=
\mathbb{E}_f\!\left[\mathrm{Var}(w(p_i)Y_i \mid p_i)\right]
+
\mathrm{Var}_f\!\left(\mathbb{E}[w(p_i)Y_i \mid p_i]\right)
=
\sigma_T^2.
\]
This proves part (i). Since $w(\cdot)$ is bounded and $Y_i\in\{-1,+1\}$, the
summands $X_i$ are uniformly bounded and therefore have finite second moments.
Hence the Lindeberg condition is satisfied, and the Lindeberg--Feller CLT
applies, yielding part (ii).
\end{proof}

\subsection{Distribution-Robustness of CJT--ESCM Comparisons}

\begin{corollary}[Moment-based robustness of the Gaussian comparison]
\label{cor:robustness}
Under Assumption~\ref{ass:regularity}, for large $n$ the probability that the
aggregate signal has the correct sign (i.e., favors the true state) is
approximated by
\[
\Pr(T_n > 0)
\;\approx\;
\Phi\!\left(\sqrt{n}\,\frac{\mu_T}{\sigma_T}\right),
\]
where $\Phi$ denotes the standard normal CDF. Hence, within this Gaussian
approximation, large-sample epistemic performance depends on the competence
distribution $f$ only through the induced moments $\mu_T$ and $\sigma_T^2$.
Replacing a specific parametric family for $f$ with a general distribution does
not alter the functional form of the approximation; it changes only the
numerical values of these moments.
\end{corollary}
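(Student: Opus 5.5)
The argument reduces the corollary to Proposition~\ref{prop:clt}. Write $Z_n = (T_n - n\mu_T)/\sqrt{n\sigma_T^2}$. The event $\{T_n > 0\}$ is the same as
\[
\{Z_n > -\sqrt{n}\,\mu_T/\sigma_T\},
\]
where $\sigma_T>0$ by Assumption~\ref{ass:regularity}. Hence
\[
\Pr(T_n>0) = 1 - \Pr\!\left(Z_n \le -\sqrt{n}\,\mu_T/\sigma_T\right).
\]
If $Z_n$ is replaced by a standard normal variable, this equals $1-\Phi(-\sqrt{n}\mu_T/\sigma_T)=\Phi(\sqrt{n}\mu_T/\sigma_T)$, using the symmetry $\Phi(-x)=1-\Phi(x)$. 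That substitution is exactly the content of the displayed approximation.

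To make ``$\approx$'' precise, I would use that convergence in distribution to a continuous limit is uniform over thresholds (P\'olya's theorem). This gives
\[
\sup_x|\Pr(Z_n\le x)-\Phi(x)|\to 0,
\]
so the approximation error vanishes for any sequence of thresholds, including the $n$-dependent threshold $-\sqrt{n}\mu_T/\sigma_T$. A sharper rate is also available. The summands $X_i=w(p_i)Y_i$ are i.i.d.\ and bounded by $\sup|w|$, so their third absolute moment is finite, and Berry--Esseen gives an error of order $\mathbb{E}|X_1-\mu_T|^3/(\sigma_T^3\sqrt{n})$. I would present the uniform statement as the justification and mention the rate as a refinement.

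This step is the main subtlety. The threshold drifts with $n$, so pointwise convergence at fixed $x$ is not enough, and the uniformity is what is actually needed. There is a related caveat. When $\mu_T\neq 0$, both sides tend to $0$ or $1$, so the approximation is accurate in absolute terms but not in relative terms in the tails. I would state it only in the absolute sense.

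The robustness claim then follows directly from the form of the limiting expression, which involves $f$ only through the scalars $\mu_T$ and $\sigma_T^2$ defined in Proposition~\ref{prop:clt}. Proposition~\ref{prop:clt} was proved for any $f$ satisfying Assumption~\ref{ass:regularity}, with no parametric form required. So replacing a parametric family by a general distribution leaves the expression $\Phi(\sqrt{n}\mu_T/\sigma_T)$ unchanged and only changes the values these moments take. The boundary event $\{T_n=0\}$ can be ignored within this approximation: its probability tends to $0$ by the same uniform convergence applied at two adjacent thresholds, so it does not matter whether ties count as correct.
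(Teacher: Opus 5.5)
Your proof is correct and follows the paper's route: it rewrites $\{T_n>0\}$ as $\{Z_n>-\sqrt{n}\,\mu_T/\sigma_T\}$, applies the convergence of Proposition~\ref{prop:clt}, and uses $1-\Phi(-x)=\Phi(x)$. Your appeal to P\'olya's theorem adds rigor the paper's one-line proof lacks, because that proof evaluates the limit at an $n$-dependent threshold without comment; your caveats that the approximation holds only in absolute error when $\mu_T\neq 0$, and that ties have vanishing probability, are both accurate.
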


\begin{proof}
By Proposition~\ref{prop:clt}, the standardized statistic converges in
distribution to a standard normal random variable. Rewriting $\Pr(T_n>0)$ in
standardized form yields the stated approximation.
\end{proof}

The role of ESCM in this reduced-form analysis is therefore to alter the
effective moments $(\mu_T,\sigma_T^2)$ of the aggregate signal through
endogenous coefficient assignment, rather than to rely on any particular
parametric specification of the competence distribution.

%========================================
%=============================
%=============================
\section{Unimodal Competence: Beta-Distributed Heterogeneity}
\label{sec:unimodal}

\subsection{The Beta Benchmark}
\label{subsec:beta}

To study epistemic aggregation under heterogeneous but unimodal information, we
begin with the canonical case in which individual competences follow a Beta
distribution. The Beta family provides a flexible benchmark for bounded
heterogeneity on $(0,1)$ and has been widely used in the epistemic social
choice literature
\citep{grofman1983,bovensDemocratic2006,dietrichDeliberation2025}.

Let $p_i \in (0,1)$ and assume
\[
p_i \sim \mathrm{Beta}(\alpha,\beta),
\qquad
\alpha,\beta>1,
\]
so that competences are concentrated around a single interior mode. The mean
and variance are
\[
\mu = \mathbb{E}[p_i] = \frac{\alpha}{\alpha+\beta},
\qquad
\sigma^2 = \mathrm{Var}(p_i)
= \frac{\alpha\beta}{(\alpha+\beta)^2(\alpha+\beta+1)}.
\]
Symmetric cases with $\alpha=\beta$ describe electorates concentrated around a
common intermediate competence level, whereas asymmetric cases allow for
skewed populations with relatively more informed or less informed electorates.

In the reduced-form framework of Section~\ref{sec:clt-general}, the
equal-weight benchmark depends on the competence distribution only through the
induced moments of the aggregate signal. The Beta family is useful because it
allows the location and dispersion of competence to vary in a controlled and
interpretable way while maintaining a unimodal benchmark structure.

\subsection{Parameterization and Numerical Setup}
\label{subsec:beta-method}

The analysis is conducted over a grid of Beta distributions parameterized by
their mean $\mu$ and standard deviation $\sigma$. Since Beta moments satisfy
\[
\sigma^2
=
\frac{\mu(1-\mu)}{\alpha+\beta+1},
\]
feasible parameter pairs must lie in the Beta-admissible region
\[
\sigma^2 < \mu(1-\mu).
\]
This region defines the set of unimodal competence environments considered in
the figures below.

All results fix $n=501$. For the ESCM specifications considered below, the
assessment length is set to $l_a=10$, and this parameter enters through the
assessment-induced mapping from competence to aggregation coefficients rather
than through an explicit simulation of the full institutional procedure. For
each specification of the competence distribution and coefficient map, the
reduced-form moments $\mu_S$ and $\sigma_S^2$ of
Section~\ref{sec:clt-general} are computed by numerical integration over the
competence distribution. For the regularized log-odds specification, the
regularization parameter is set to $\varepsilon=0.01$. The approximate success
probability is then evaluated as
\[
\Pr(S_n>0)\approx \Phi\!\left(\sqrt{n}\,\frac{\mu_S}{\sigma_S}\right),
\]
so that the figures below should be interpreted as diagnostics for the
large-sample Gaussian approximation rather than as finite-sample guarantees for
an implemented ESCM.

\subsection{Results under Beta-Distributed Competence}
\label{subsec:beta-results}

\paragraph{Equal-weight benchmark.}
Figure~\ref{fig:beta_cjt} reports the approximate success probability under the
equal-weight rule. Within the Gaussian approximation and for the Beta
environments considered here, performance is driven mainly by the mean
competence level $\mu$, with a sharp transition near $\mu=0.5$. In the
classical homogeneous CJT benchmark, dispersion plays no independent role
once the mean is fixed; in the present heterogeneous Beta setting, the
numerical results indicate that variation in $\sigma$ has only a limited
additional effect on the equal-weight rule.

\begin{figure}[t]
\centering
\includegraphics[width=0.48\linewidth]{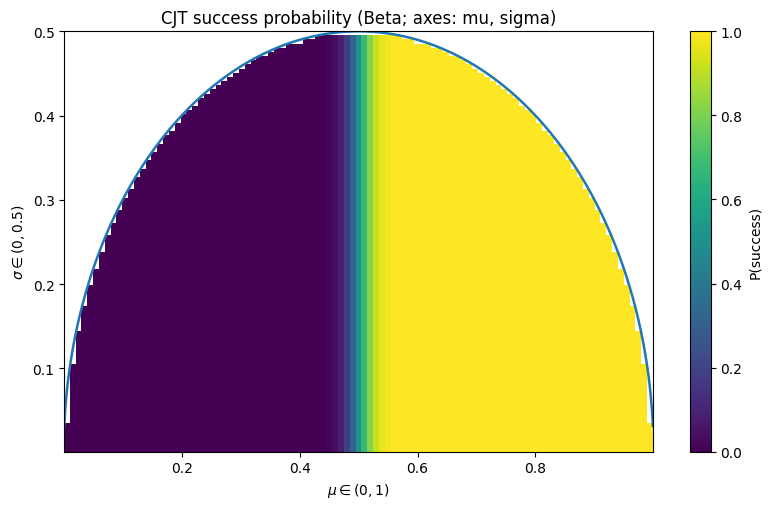}
\caption{Approximate equal-weight success probability under Beta-distributed competence.}
\label{fig:beta_cjt}
\end{figure}

\paragraph{ESCM with linear coefficients.}
Figure~\ref{fig:beta_linear} reports ESCM performance under the linear
specification
\[
g(\bar{s})=\bar{s}.
\]
Unlike the equal-weight benchmark, the induced approximate success probability
depends on both $\mu$ and $\sigma$. Gains are concentrated in regions where
average competence lies near the indifference threshold and competence
dispersion is non-negligible. In those environments, competence-sensitive
reweighting improves the contribution of better-informed participants without
requiring extreme separation in the competence distribution. As $\mu$ moves
further away from $0.5$, the room for improvement narrows because the
equal-weight rule already achieves high approximate accuracy.

\begin{figure}[t]
\centering
\begin{subfigure}[t]{0.48\linewidth}
  \centering
  \includegraphics[width=\linewidth]{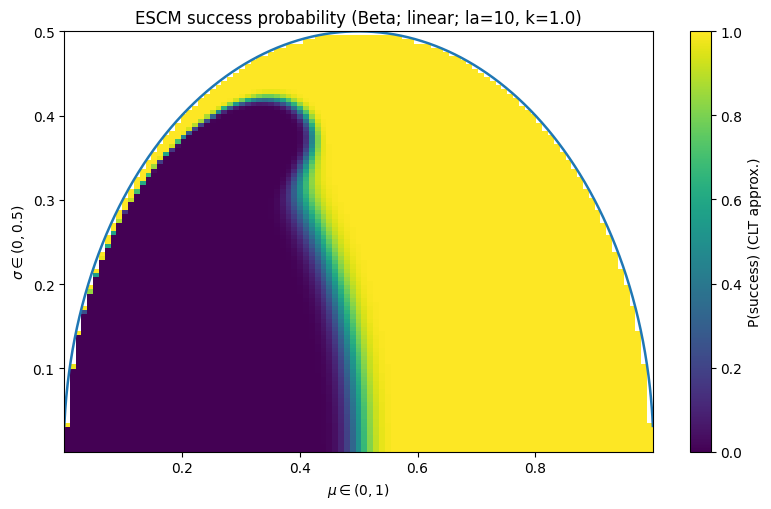}
  \caption{Approximate ESCM success probability}
\end{subfigure}\hfill
\begin{subfigure}[t]{0.48\linewidth}
  \centering
  \includegraphics[width=\linewidth]{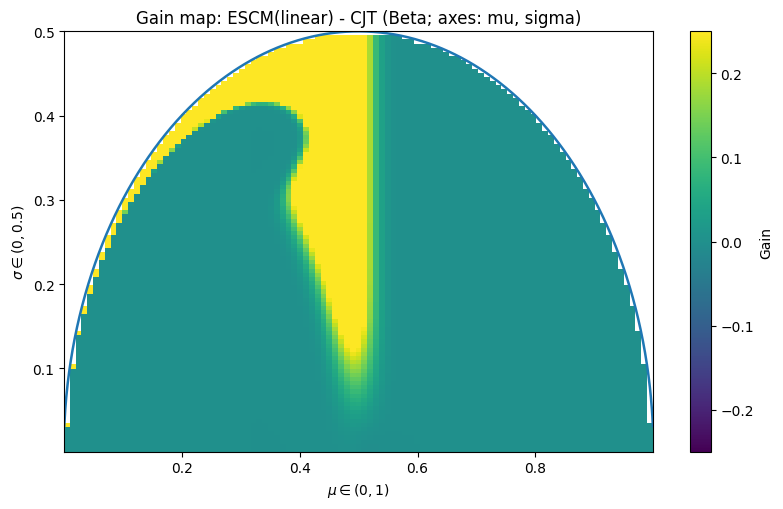}
  \caption{Gain relative to the equal-weight benchmark}
\end{subfigure}
\caption{ESCM with linear coefficients under Beta-distributed competence ($l_a=10$, $k=1$).}
\label{fig:beta_linear}
\end{figure}

\paragraph{ESCM with regularized log-odds coefficients.}
Figure~\ref{fig:beta_logit} reports ESCM performance under the regularized
log-odds specification. Relative to the linear specification, gains are larger
and extend over a broader region of the $(\mu,\sigma)$ plane. This pattern is
consistent with a more selective competence-sensitive transformation of
assessment scores in this numerical environment. Since the regularized log-odds
map may generate negative coefficients for low scores, the mechanism is best
interpreted here as a form of signed evidence aggregation rather than as a
non-negative weighted-majority rule. The regularization parameter
$\varepsilon=0.01$ keeps the induced coefficients bounded near the score
boundaries.

\begin{figure}[t]
\centering
\begin{subfigure}[t]{0.48\linewidth}
  \centering
  \includegraphics[width=\linewidth]{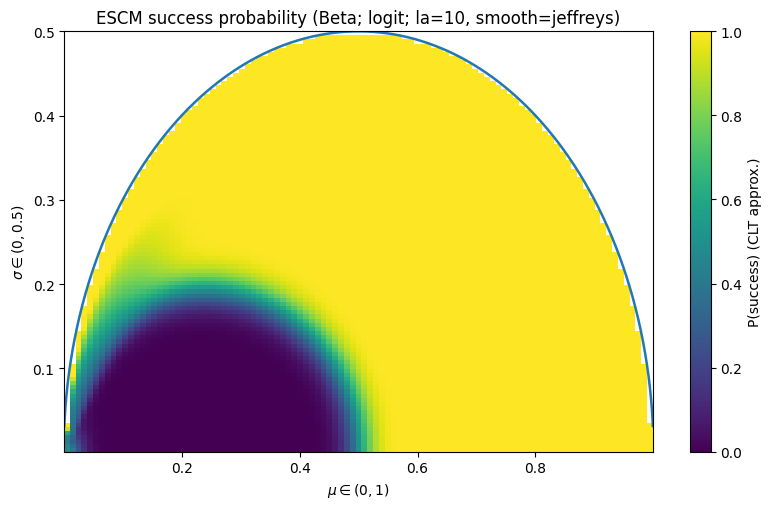}
  \caption{Approximate ESCM success probability}
\end{subfigure}\hfill
\begin{subfigure}[t]{0.48\linewidth}
  \centering
  \includegraphics[width=\linewidth]{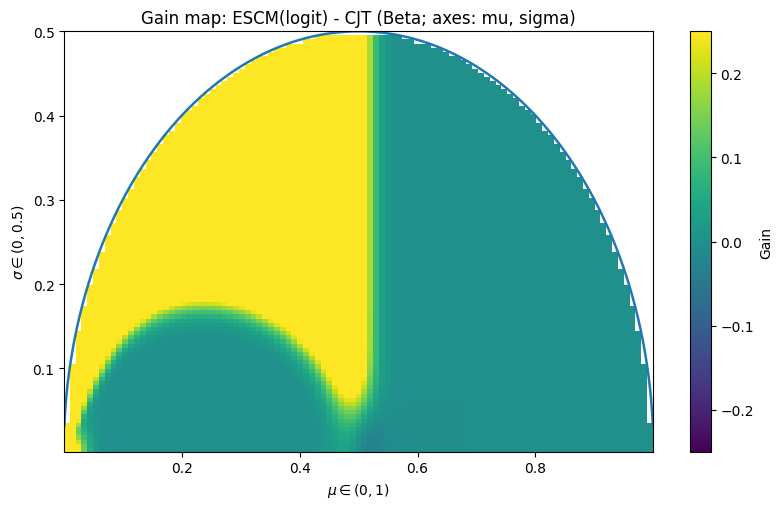}
  \caption{Gain relative to the equal-weight benchmark}
\end{subfigure}
\caption{ESCM with regularized log-odds coefficients under Beta-distributed competence ($l_a=10$, $\varepsilon=0.01$).}
\label{fig:beta_logit}
\end{figure}

%=============================
%=============================
\section{Segmented Competence and Mixture Models}
\label{sec:cmm}

\subsection{The Competence Mixture Model}
\label{subsec:cmm}

While the Beta distribution provides a natural benchmark for unimodal
heterogeneity, many electorates exhibit segmented informational structures, in
which competence is clustered into distinct groups. Such segmentation may arise
from differences in education, media environments, or domain-specific
expertise, and is frequently invoked in discussions of informed minorities and
issue-specific asymmetries.%\footnote{These examples are intended as stylized
%motivations rather than as empirical generalizations about all electorates.}
To capture this form of heterogeneity in a controlled way, we adopt a
competence mixture model (CMM) as a structured benchmark environment.

Let $p_i \in (0,1)$ and assume
\[
p_i \sim \sum_{c=1}^{C} \pi_c\, f_c(p),
\qquad
\sum_{c=1}^{C} \pi_c = 1,
\]
where $f_c$ denotes the competence distribution of group $c$ and $\pi_c$ its
population share. We focus on the case $C=3$, corresponding to a
low-competence group, an intermediate group, and a highly competent group.
Mixture specifications of this kind arise naturally when electorates contain
subpopulations that differ systematically in informational quality
\citep{nitzan1982,bolandMajority1989,estlundEpistemic2018,bovens2003,dietrichDeliberation2025}.

In the reduced-form framework of Section~\ref{sec:clt-general}, the
equal-weight benchmark depends only on the average competence level
\[
\bar{\mu} = \int_0^1 p\,f(p)\,dp.
\]
Segmented competence mixtures therefore provide a demanding benchmark for ESCM:
if the Gaussian approximation predicts any improvement over the equal-weight
rule, that improvement must come from exploiting information embedded in the
distribution of competence beyond its mean.

\subsection{The CMM-3 Benchmark}
\label{subsec:cmm-method}

We consider equal population shares
\[
\pi_1=\pi_2=\pi_3=\frac13,
\]
with the intermediate group fixed at $\mu_2=0.5$ and the remaining groups
centered at $\mu_1<0.5$ and $\mu_3>0.5$. Each component is modeled as a
truncated Gaussian distribution on $(0,1)$ with standard deviation
\[
\sigma_{\mathrm{comp}} = 0.12.
\]
These relatively wide component distributions generate substantial overlap
across groups, thereby creating a demanding environment for
competence-sensitive aggregation: group boundaries are not sharp signal
separators, and any gain from ESCM must be achieved despite substantial
within-group and cross-group noise.

The analysis scans the $(\mu_1,\mu_3)$ plane with
\[
\mu_1 \in (0,0.5),
\qquad
\mu_3 \in (0.5,1),
\]
holding $\mu_2$ fixed. Horizontal movement increases the competence of the
least informed group, vertical movement increases that of the most informed
group, and diagonal movement increases separation while leaving the
intermediate group unchanged.

Within the Gaussian approximation of Section~\ref{sec:clt-general}, the
equal-weight benchmark depends only on $\bar{\mu}$, whereas ESCM can in
principle respond to how competence is distributed across groups through
endogenous reweighting. CMM-3 thus isolates a setting in which structured
segmentation is present but the equal-weight rule responds only to the overall
mean competence.

\subsection{Results under CMM-3 Competence}
\label{subsec:cmm-results}

All results fix $n=501$. For the ESCM specifications considered below, the
assessment length is set to $l_a=10$, and this parameter enters through the
assessment-induced mapping from competence to aggregation coefficients rather
than through an explicit simulation of the full institutional procedure. For
each specification of the competence distribution and coefficient map, the
reduced-form moments $\mu_S$ and $\sigma_S^2$ of
Section~\ref{sec:clt-general} are computed by numerical integration over $f$.
For the regularized log-odds specification, the regularization parameter is set
to $\varepsilon=0.01$. The approximate success probability is then evaluated as
\[
\Pr(S_n>0)\approx \Phi\!\left(\sqrt{n}\,\frac{\mu_S}{\sigma_S}\right),
\]
so that the figures below should be interpreted as diagnostics for the
large-sample Gaussian approximation rather than as finite-sample guarantees for
an implemented ESCM.

\paragraph{Equal-weight benchmark.}
Figure~\ref{fig:cmm_cjt} shows that the approximate success probability under
the equal-weight rule depends only on $\bar{\mu}$ and is insensitive to
competence segmentation. Large regions of the $(\mu_1,\mu_3)$ plane therefore
yield only modest collective accuracy even when a highly competent group is
present, provided average competence remains near the indifference threshold.
This illustrates the demanding nature of the mixture benchmark: within the
Gaussian approximation, the equal-weight rule responds only to average
competence, leaving room for improvement only if a weighting scheme can exploit
how competence is distributed beyond its mean.

\begin{figure}[t]
\centering
\includegraphics[width=0.48\linewidth]{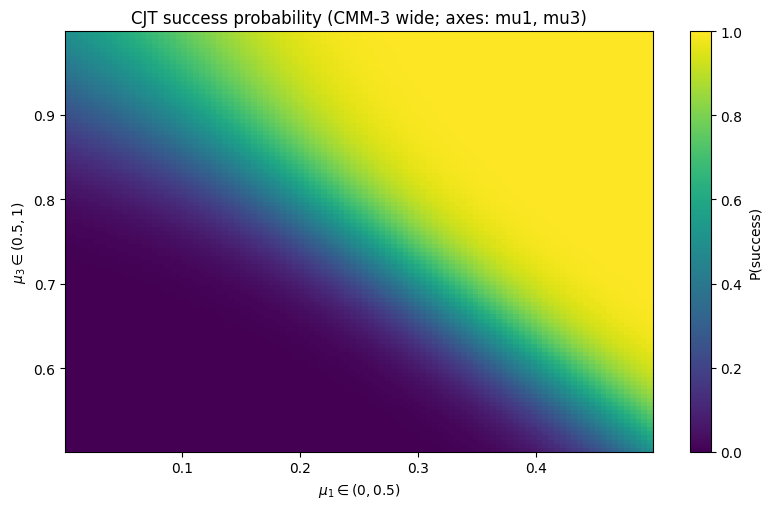}
\caption{Approximate equal-weight success probability under CMM-3 competence.}
\label{fig:cmm_cjt}
\end{figure}

\paragraph{ESCM with linear coefficients.}
Figure~\ref{fig:cmm_linear} reports ESCM performance under the linear
specification
\[
g(\bar{s})=\bar{s}.
\]
Unlike the equal-weight benchmark, the induced approximate success probability
depends non-trivially on $(\mu_1,\mu_3)$: performance improves as $\mu_3$ rises
even when $\mu_1$ remains low, indicating that, within this structured
benchmark, ESCM exploits the presence of a more informed subgroup. Gains are
concentrated where the equal-weight rule is most fragile and diminish where
average competence already implies high approximate accuracy.

\begin{figure}[t]
\centering
\begin{subfigure}[t]{0.48\linewidth}
  \centering
  \includegraphics[width=\linewidth]{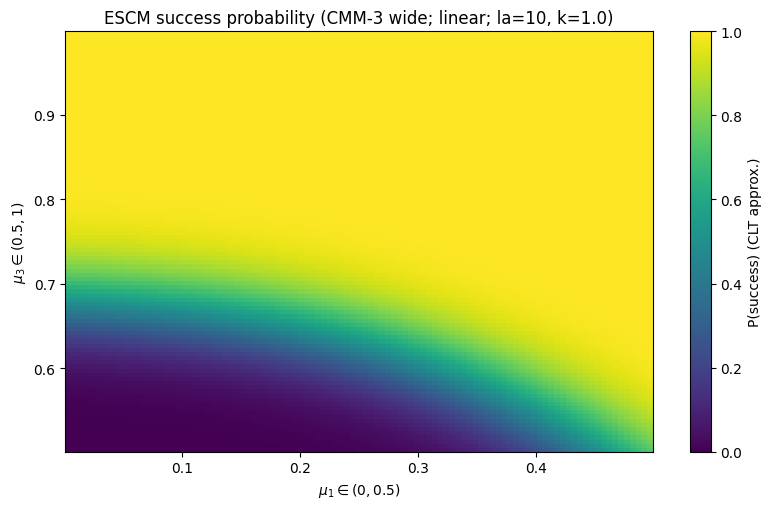}
  \caption{Approximate ESCM success probability}
\end{subfigure}\hfill
\begin{subfigure}[t]{0.48\linewidth}
  \centering
  \includegraphics[width=\linewidth]{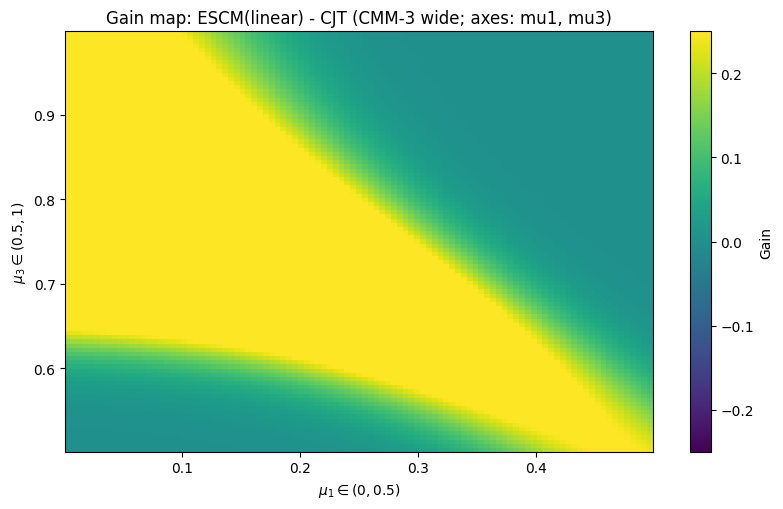}
  \caption{Gain relative to the equal-weight benchmark}
\end{subfigure}
\caption{ESCM with linear coefficients under CMM-3 competence ($l_a=10$, $k=1$).}
\label{fig:cmm_linear}
\end{figure}

\paragraph{ESCM with regularized log-odds coefficients.}
Figure~\ref{fig:cmm_logit} reports ESCM performance under the regularized
log-odds specification. Relative to the linear specification, gains are larger
and extend across a wider portion of the $(\mu_1,\mu_3)$ plane. This pattern is
consistent with a more selective competence-sensitive transformation of
assessment scores in this numerical environment. Since the regularized log-odds
map may generate negative coefficients for low scores, the mechanism is best
interpreted here as a form of signed evidence aggregation rather than as a
non-negative weighted-majority rule. The regularization parameter
$\varepsilon=0.01$ keeps the induced coefficients bounded near the score
boundaries.

\begin{figure}[t]
\centering
\begin{subfigure}[t]{0.48\linewidth}
  \centering
  \includegraphics[width=\linewidth]{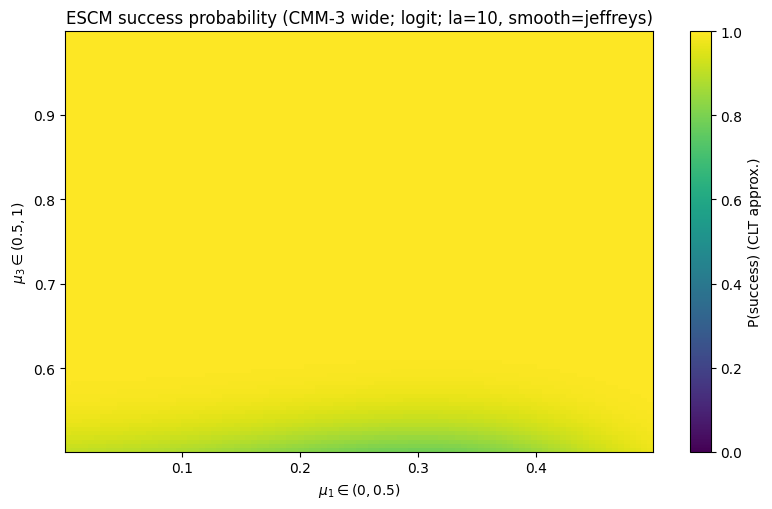}
  \caption{Approximate ESCM success probability}
\end{subfigure}\hfill
\begin{subfigure}[t]{0.48\linewidth}
  \centering
  \includegraphics[width=\linewidth]{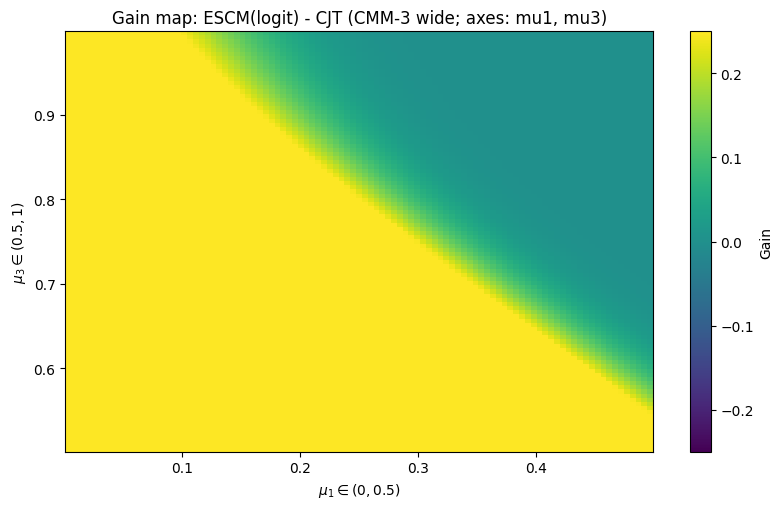}
  \caption{Gain relative to the equal-weight benchmark}
\end{subfigure}
\caption{ESCM with regularized log-odds coefficients under CMM-3 competence ($l_a=10$, $\varepsilon=0.01$).}
\label{fig:cmm_logit}
\end{figure}

%=============================================

%=============================================
\section{A General Mean-Improvement Result}
\label{sec:dominance}

Sections~\ref{sec:unimodal} and~\ref{sec:cmm} provide numerical evidence that,
for the specific coefficient maps examined in this paper, competence-sensitive
reweighting can improve the Gaussian large-sample approximation to collective
accuracy under both unimodal and segmented competence distributions. The
purpose of the present section is narrower: it isolates a weaker but fully
general analytical mechanism underlying those gains. Specifically, it shows
that, under heterogeneous competence, bounded monotone reweighting increases
the mean signed aggregate signal relative to the equal-weight benchmark.

Throughout this section, coefficient maps are normalized so that
\[
\mathbb{E}_f[w(p)] = 1.
\]
Because positive scalar rescaling multiplies both the mean and the standard
deviation of the reduced-form aggregate signal by the same factor, this
normalization leaves the Gaussian criterion $\mu_S/\sigma_S$ unchanged while
making comparisons across rules transparent.

\begin{proposition}[Mean improvement under competence-sensitive weighting]
\label{prop:mean-improvement}
Under Assumption~\ref{ass:regularity} and the normalization
$\mathbb{E}_f[w(p)] = 1$, let
\[
\mu_S(w) := \mathbb{E}_f[w(p)(2p-1)]
\]
denote the mean signed aggregate signal induced by the coefficient map $w$, and
let $w^{\mathrm{EW}} \equiv 1$ denote the equal-weight benchmark. If
$w^{\mathrm{ESCM}}(p)$ is bounded, non-decreasing, and non-constant on the
support of $f$, then:
\begin{enumerate}[label=(\roman*)]
\item \textbf{Weak mean dominance.}
\[
\mu_S(w^{\mathrm{ESCM}}) \ge \mu_S(w^{\mathrm{EW}}).
\]

\item \textbf{Strict improvement under heterogeneity.}
If $\mathrm{Var}_f(p)>0$, then
\[
\mu_S(w^{\mathrm{ESCM}}) > \mu_S(w^{\mathrm{EW}}).
\]

\item \textbf{Degenerate case.}
If $p_i = p_0$ for all $i$, then
\[
\mu_S(w^{\mathrm{ESCM}})=\mu_S(w^{\mathrm{EW}}),
\]
so competence-sensitive weighting yields no mean advantage.
\end{enumerate}
\end{proposition}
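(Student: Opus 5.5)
The proof reduces everything to a covariance computation. Under the normalization $\mathbb{E}_f[w(p)]=1$, write
\[
\mu_S(w)=\mathbb{E}_f[w(p)(2p-1)]
=\mathbb{E}_f[w(p)]\,\mathbb{E}_f[2p-1]+\mathrm{Cov}_f\bigl(w(p),2p-1\bigr)
=\mu_S(w^{\mathrm{EW}})+2\,\mathrm{Cov}_f\bigl(w(p),p\bigr),
\]
since $\mu_S(w^{\mathrm{EW}})=\mathbb{E}_f[2p-1]$. All expectations are finite because $w$ is bounded and $p\in[0,1]$. All three parts therefore amount to statements about the sign of $\mathrm{Cov}_f(w(p),p)$ for a bounded non-decreasing $w$.

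For part (i), I would use the standard coupling (Chebyshev's association) identity. Let $p,p'$ be independent draws from $f$. Then
\[
2\,\mathrm{Cov}_f(w(p),p)=\mathbb{E}\bigl[(w(p)-w(p'))(p-p')\bigr].
\]
Monotonicity of $w$ makes the integrand pointwise non-negative, so the covariance is $\ge 0$ and weak dominance follows.

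For part (ii), the expectation of a non-negative integrand is zero only if the integrand vanishes almost surely. So it suffices to show that $(w(p)-w(p'))(p-p')>0$ with positive probability. Non-constancy of $w$ on the support of $f$ should supply this: there exist support points $a<b$ with $w(a)<w(b)$.

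This step is the main obstacle, because ``non-constant on the support'' has to be read measure-theoretically. I would interpret it as: $w(p)$ is not $f$-a.s. constant. Under that reading, set $c$ to be a median-type threshold, namely the infimum of points $t$ at which $\Pr_f(w(p)\le w(t))$ reaches $1/2$. The events $A=\{w(p)<w(c')\}$ and $B=\{w(p)>w(c'')\}$ for suitable $c'\le c''$ then both have positive probability. Monotonicity of $w$ places $A$ below $B$, that is, $p<p'$ whenever $p\in A$ and $p'\in B$. On $A\times B$ the integrand is therefore strictly positive, and this set has product measure $\Pr(A)\Pr(B)>0$. In this argument it is the a.s.-non-constancy of $w(p)$, combined with monotonicity, that forces the strict inequality. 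The hypothesis $\mathrm{Var}_f(p)>0$ is automatically implied by it, since a nondegenerate $w(p)$ requires nondegenerate $p$.

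If the weaker pointwise reading of non-constancy is intended, I would add a remark that it must be strengthened to hold on sets of positive $f$-mass. Otherwise a counterexample exists: $w$ can jump only at an atomless point.

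Part (iii) is immediate. If $p\equiv p_0$, then $\mathrm{Cov}_f(w(p),p)=0$, or directly $\mu_S(w)=(2p_0-1)\mathbb{E}_f[w]=2p_0-1=\mu_S(w^{\mathrm{EW}})$.
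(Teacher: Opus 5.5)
Your route is the paper's. The paper also writes $\mu_S(w)=\mathbb{E}_f[h(p)]+\mathrm{Cov}_f(w(p),h(p))$ with $h(p)=2p-1$ and appeals to Chebyshev's association inequality, which your coupling identity simply proves.

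On strictness you are more careful than the paper, which only asserts strict inequality ``whenever $\mathrm{Var}_f(p)>0$''. Your caveat is correct, and reading non-constancy as non-degeneracy of $w(p)$ under $f$ is the right repair. Take $f$ uniform on $[0,1]$ with $w(0)=0$ and $w\equiv 1$ on $(0,1]$. This meets every stated hypothesis, including $\mathrm{Var}_f(p)>0$, yet gives $\mu_S(w)=\mu_S(w^{\mathrm{EW}})$. Describing the counterexample as a jump ``at an atomless point'' is too loose: a jump at an interior point of a continuous support still yields strict improvement. For monotone $w$, the failure occurs only when $w$ departs from its a.s.\ value at an atomless endpoint of the support.

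One local step needs repair. The events $A=\{w(p)<w(c')\}$ and $B=\{w(p)>w(c'')\}$ with $c'\le c''$ cannot both have positive probability when $w$ is a two-valued step. For example, take $w=2\cdot\mathbf{1}\{p\ge 1/2\}$ under uniform $f$: $\Pr(A)>0$ forces $c'\ge 1/2$, while $\Pr(B)>0$ forces $c''<1/2$. Drop the median construction and pick any $t$ with $0<\Pr_f(w(p)\le t)<1$; such a $t$ exists exactly because $w(p)$ is not $f$-a.s.\ constant. With $A=\{w(p)\le t\}$ and $B=\{w(p)>t\}$ you get $w(p)<w(p')$ on $A\times B$, hence $p<p'$ by monotonicity. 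The rest of your argument then goes through.
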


\begin{proof}
Write
\[
h(p):=2p-1,
\]
which is strictly increasing on $[0,1]$. Then
\[
\mu_S(w)=\mathbb{E}_f[w(p)h(p)].
\]
Under the normalization $\mathbb{E}_f[w(p)]=1$, we may write
\[
\mathbb{E}_f[w(p)h(p)]
=
\mathbb{E}_f[h(p)]
+
\mathrm{Cov}_f(w(p),h(p)).
\]
Since $w^{\mathrm{ESCM}}$ is non-decreasing and non-constant on the support of
$f$, while $h$ is strictly increasing, Chebyshev's association inequality for
comonotone functions implies
\[
\mathrm{Cov}_f\!\bigl(w^{\mathrm{ESCM}}(p),h(p)\bigr)\ge 0,
\]
with strict inequality whenever $\mathrm{Var}_f(p)>0$. Hence
\[
\mu_S(w^{\mathrm{ESCM}})
=
\mu_S(w^{\mathrm{EW}})
+
\mathrm{Cov}_f\!\bigl(w^{\mathrm{ESCM}}(p),h(p)\bigr),
\]
which proves parts (i) and (ii). Part (iii) is immediate: if
$f=\delta_{p_0}$, then both $w^{\mathrm{ESCM}}(p)$ and $h(p)$ are constant
$f$-almost surely, so the covariance term vanishes.
\end{proof}

\begin{corollary}[Sufficient condition for Gaussian-accuracy improvement]
\label{cor:mean-to-snr}
Under the assumptions of Proposition~\ref{prop:mean-improvement}, if
\[
\mu_S(w^{\mathrm{ESCM}})\,\sigma_S(w^{\mathrm{EW}})
>
\mu_S(w^{\mathrm{EW}})\,\sigma_S(w^{\mathrm{ESCM}}),
\]
then
\[
\frac{\mu_S(w^{\mathrm{ESCM}})}{\sigma_S(w^{\mathrm{ESCM}})}
>
\frac{\mu_S(w^{\mathrm{EW}})}{\sigma_S(w^{\mathrm{EW}})},
\]
and therefore the Gaussian approximation of
Corollary~\ref{cor:robustness} assigns a higher large-sample success
probability to ESCM than to the equal-weight benchmark.
\end{corollary}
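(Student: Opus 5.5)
The corollary is essentially an algebraic rearrangement followed by an appeal to Corollary~\ref{cor:robustness}. The plan is first to establish that both standard deviations $\sigma_S(w^{\mathrm{ESCM}})$ and $\sigma_S(w^{\mathrm{EW}})$ are strictly positive. For the weighted rule this is exactly the positivity requirement in Assumption~\ref{ass:regularity}, which we take to hold for $w^{\mathrm{ESCM}}$. For the equal-weight rule, $w\equiv 1$ gives
\[
\sigma_S^2(w^{\mathrm{EW}})=\mathbb{E}_f\!\left[1-(2p-1)^2\right]+\mathrm{Var}_f(2p-1).
\]
This quantity vanishes only if $f$ puts all its mass on $\{0,1\}$ and $2p-1$ is also degenerate, i.e.\ $p$ is a.s.\ constant at $0$ or $1$. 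Outside that trivial case we assume, as in Assumption~\ref{ass:regularity}, that this variance is positive too. This is the only point requiring care, so I would state explicitly that Assumption~\ref{ass:regularity} is imposed for both coefficient maps.

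With $\sigma_S(w^{\mathrm{ESCM}})>0$ and $\sigma_S(w^{\mathrm{EW}})>0$, divide the hypothesized inequality
\[
\mu_S(w^{\mathrm{ESCM}})\,\sigma_S(w^{\mathrm{EW}})>\mu_S(w^{\mathrm{EW}})\,\sigma_S(w^{\mathrm{ESCM}})
\]
by the positive product $\sigma_S(w^{\mathrm{ESCM}})\sigma_S(w^{\mathrm{EW}})$. Division by a positive number preserves strict inequality, so this yields the ratio comparison directly. No sign assumption on the $\mu_S$ is needed.

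Finally, invoke Corollary~\ref{cor:robustness} for each rule, which applies since both satisfy Assumption~\ref{ass:regularity}. It gives the approximations $\Pr(T_n>0)\approx\Phi(\sqrt n\,\mu_S/\sigma_S)$. Since $\Phi$ is strictly increasing and $\sqrt n>0$, the larger ratio yields the larger approximate success probability for every $n$. This is a comparison of the Gaussian approximants only, not of exact finite-$n$ probabilities.

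It is worth remarking that, by Proposition~\ref{prop:mean-improvement}(ii), the numerator comparison $\mu_S(w^{\mathrm{ESCM}})>\mu_S(w^{\mathrm{EW}})$ already holds under heterogeneity. The stated condition therefore reduces to controlling the variance inflation $\sigma_S(w^{\mathrm{ESCM}})/\sigma_S(w^{\mathrm{EW}})$ relative to the mean gain, when $\mu_S(w^{\mathrm{EW}})>0$. This remark is not needed for the proof itself.
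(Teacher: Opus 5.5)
Your proof is correct and follows the paper's route: the paper calls the result immediate from rearranging the inequality (dividing by the positive product $\sigma_S(w^{\mathrm{ESCM}})\sigma_S(w^{\mathrm{EW}})$) and using the monotonicity of $\Phi$ in Corollary~\ref{cor:robustness}. Your explicit positivity check is a worthwhile addition and needs no extra assumption. For $w\equiv 1$ one has $\sigma_S^2(w^{\mathrm{EW}})=1-(2\bar\mu-1)^2=4\bar\mu(1-\bar\mu)$ with $\bar\mu=\mathbb{E}_f[p]$, and this is strictly positive because non-constancy of $w^{\mathrm{ESCM}}$ on the support of $f$ already forces $f$ to be non-degenerate.
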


\begin{proof}
Immediate from rearranging the inequality and applying
Corollary~\ref{cor:robustness}.
\end{proof}

The condition in Corollary~\ref{cor:mean-to-snr} has a simple interpretation:
endogenous weighting improves the CLT-based Gaussian criterion whenever the
proportional gain in the mean signed signal exceeds the proportional increase
in its dispersion. In other words, mean improvement translates into higher
approximate large-sample accuracy as long as the induced increase in dispersion
remains sufficiently limited.

Proposition~\ref{prop:mean-improvement} is therefore intentionally more modest
than a universal signal-to-noise-ratio dominance theorem, but it remains
substantive. It identifies a distribution-general mechanism that, under
heterogeneous competence and the stated assumptions, always raises the mean
signed aggregate signal relative to equal weighting: bounded monotone
reweighting creates a positive covariance between competence $p$ and signed
signal quality $2p-1$.

Sections~\ref{sec:unimodal} and~\ref{sec:cmm} then show numerically that, for
the specific coefficient maps and competence environments studied in this
paper, this mean improvement is often accompanied by only moderate increases in
dispersion and thus by higher approximate large-sample accuracy.

A related distinction concerns optimality. The Nitzan--Paroush benchmark is
likelihood-oriented, whereas maximizing $\mu_S(w)/\sigma_S(w)$ is a different
optimization problem. Proposition~\ref{prop:asymptotic} should therefore be
read as an asymptotic relation between ESCM score transforms and
likelihood-based reliability benchmarks, not as a result showing that log-odds
coefficients globally maximize the CLT-based Gaussian criterion.

%==============================
%==============================
\section{Conclusion}
\label{sec:conclusion}

This paper has proposed the Epistemic Shared-Choice Mechanism (ESCM) as a
transparent institutional architecture for implementing endogenous bounded
weighting in binary collective decisions. Rather than presupposing direct
knowledge of individual competence or fixing hierarchies of influence
\emph{ex ante}, ESCM uses assessment performance as an observable proxy for
latent reliability and maps it into issue-specific aggregation coefficients.

The main general analytical result, Proposition~\ref{prop:mean-improvement}, shows that, under heterogeneous
competence, any bounded non-decreasing coefficient map increases the mean
signed aggregate signal relative to the equal-weight benchmark. The numerical analyses show that, for the specific coefficient maps
examined here, this mean improvement is often associated with higher Gaussian
large-sample accuracy under both unimodal and segmented competence
distributions.

Using central limit approximations, the analysis has clarified how epistemic
performance depends on the informational structure of the population. In
unimodal competence environments, endogenous reweighting produces relatively
modest gains over the equal-weight rule. In segmented competence environments,
captured here through competence mixture models, the gains can be substantially
larger. In such settings, competence-sensitive aggregation allows
better-informed minorities to exert greater influence on the aggregate signal
even when average competence remains close to randomness.

What distinguishes ESCM from prior weighted aggregation proposals is the
combination of three features it is designed to satisfy simultaneously:
aggregation coefficients are generated endogenously from observable assessment
performance, they are bounded by design, and they are issue-specific rather
than fixed across decision domains. In this sense, the paper identifies a
general mechanism through which competence-sensitive reweighting can improve
collective performance: by assigning weakly greater aggregation coefficients to
more competent participants, it raises the mean signed aggregate signal
whenever competence is heterogeneous.

At the same time, the analysis also clarifies the paper's limits. The formal
results are derived in a stylized reduced-form representation of endogenous
weighting, and the assessment procedure is treated as a procedural
implementation rather than as a fully validated psychometric institution. The
contribution is therefore best understood as establishing a general
mean-improvement result together with numerical evidence that, in structured
competence environments, this mechanism can translate into higher approximate
large-sample accuracy.

\subsection*{Directions for Future Research}

The present analysis deliberately abstracts from several dimensions that warrant separate investigation.

First, the psychometric validity of the assessment stage deserves closer study.
Key issues include the reliability of peer-generated item pools as estimators
of latent competence, the sensitivity of the resulting coefficient map to
assessment noise, and the behavior of alternative scoring rules under more
realistic response models.

Second, the strategic robustness of ESCM remains to be characterized. The
present analysis abstracts from incentives to manipulate item authorship,
peer-review evaluations, or assessment responses. Formal results on
incentive-compatibility and on the coalition size required to overturn a
correct collective outcome would substantially strengthen the mechanism's
institutional foundations.

Third, the analysis assumes conditional independence of individual signals.
Extending ESCM to settings with informational cascades
\citep{banerjeeSimple1992,bikhchandaniTheory1992}, media-driven correlation
\citep{gagrcinDefending2024}, or network-mediated dependence
\citep{acemogluOpinion2011} would clarify when endogenous reweighting continues
to add epistemic value and when it instead amplifies redundancy or common
informational shocks.

Fourth, while the numerical analyses provide evidence within the Gaussian
approximation, a systematic finite-sample simulation study varying population
size, assessment length, and noise levels would more precisely characterize
when the CLT benchmark is reliable and how closely it tracks realized
finite-sample performance \citep{grimEpistemic2024}.

Fifth, the present paper focuses on binary collective decisions. Extending the
framework to multi-option settings would require moving from a scalar
competence model to richer confusion structures and from binary aggregation to
multi-class competence-sensitive decision rules.

%\vspace{2mm}
%\textit{Acknowledgments:} The author would like to thank anonymous referees of previous submissions and the community discussion for improving this theory with insights, reflections and challenging questions.

\backmatter

%% Declarations section (fill as needed for the journal)
\section*{Declarations}
\begin{itemize}
\item \textbf{Funding} Not applicable.
\item \textbf{Conflict of interest} The author declares no competing interests.
\item \textbf{Ethics approval and consent to participate} Not applicable.
\item \textbf{Consent for publication} Not applicable.
\item \textbf{Data availability} Not applicable.
\item \textbf{Materials availability} Not applicable.
\item \textbf{Code availability} Available upon request.
\item \textbf{Author contribution} Sole author.
\end{itemize}

%% Bibliography
%% The sn-jnl class controls the bibliography style; do NOT set \bibliographystyle here.
\bibliography{ESCM_Biblio}

%% Optional author blurb (can be moved to a footnote or Acknowledgments)
\vspace{1em}
{\small ENRICO MANFREDI, Ph.D. in Mathematics, University of Bologna. E-mail: \href{mailto:enrico.manfredi@gmail.com}{enrico.manfredi@gmail.com}.}

\end{document}